\documentclass[11pt]{article}
\usepackage{ifthen,amsmath,amssymb,amsthm,latexsym,color,epsfig,bm,verbatim,hyperref}
\usepackage{multirow}

\usepackage{pstricks}
\usepackage{pst-tree}
\usepackage{pst-node}
\usepackage{pst-plot}
\usepackage{wrapfig}

\def\bcr{\mathop{\rm bcr}\nolimits}

\def\itemi{\item [$(i)$]}
\def\itemii{\item [$(ii)$]}

\def\itemiii{\item [$(iii)$]}
\def\itemiv{\item [$(iv)$]}

\newcommand{\NP}{\ensuremath{\bm{\mathrm{NP}}}}
\newcommand{\coNP}{\ensuremath{\bm{\mathrm{coNP}}}}
\newcommand{\NN}{\ensuremath{\mathbb{N}}}
\newcommand{\ZN}{\ensuremath{\mathbb{Z}}}
\newcommand{\RN}{\ensuremath{\mathbb{R}}}

\newtheorem{theorem}{Theorem}[section]
\newtheorem{lemma}[theorem]{Lemma}

\newtheorem{corollary}[theorem]{Corollary}

\theoremstyle{definition}

\newtheorem{claim}{Claim}
\newtheorem{remark}[claim]{Remark}
\makeatletter
\@addtoreset{claim}{theorem}
\makeatother

\usepackage{pstricks}
\usepackage{pst-tree}
\usepackage{pst-node}

\psset{unit=0.9pt}

\begin{document}

\title{Link Crossing Number is \NP-hard}

\author{
{Arnaud de Mesmay\thanks{Partially supported by the projects ANR-16-CE40-0009-01 (GATO), ANR-18-CE40-0004-01 (FOCAL) and CNRS-PEPS-COMP3D.}
} \\
{\small Univ. Grenoble Alpes} \\[-0.03cm]
{\small CNRS, Grenoble INP, GIPSA-lab}\\[-0.03cm]
{\small 38000, Grenoble, France} \\[-0.03cm]
{\small \tt arnaud.de-mesmay@gipsa-lab.fr}
\and
{Marcus Schaefer
} \\
{\small School of Computing} \\[-0.03cm]
{\small DePaul University} \\[-0.03cm]
{\small Chicago, Illinois 60604, USA} \\[-0.03cm]
{\small \tt mschaefer@cdm.depaul.edu}
\and
{Eric Sedgwick
} \\
{\small School of Computing} \\[-0.03cm]
{\small DePaul University} \\[-0.03cm]
{\small Chicago, Illinois 60604, USA} \\[-0.03cm]
{\small \tt esedgwick@cdm.depaul.edu}\\[-0.03cm]
}

\maketitle

\begin{abstract}
 We show that determining the crossing number of a link is \NP-hard. For some weaker notions of link equivalence, we also show \NP-completeness.
\end{abstract}

\section{Introduction}

A {\em knot} is a simple closed curve in $\RN^3$. A link is a collection of (disjoint) knots in $\RN^3$, which we call {\em components} of the link. Two knots or links are called {\em (link)-equivalent} if there is an ambient isotopy between the two. From the very beginning, knots and links have been represented
in $\RN^2$ using {\em link diagrams}, projections of the three-dimensional links to $\RN^2$ which are mostly injective, except for finitely many {\em crossings} which are labeled to distinguish which arc crosses over/under the other arc at the crossing. In drawings of link diagrams,
the arc passing under is typically shown with a small gap, while the arc passing over the crossing is uninterrupted. Figure~\ref{fig:linkex} shows
crossing-minimal drawings of the unknot, a trefoil knot, which is a knot with at most three crossings which is not equivalent to the unknot, and
the Hopf link, which has two crossings, but consists of two components (so is not a knot).

\begin{figure}[htb]
  \centering
\begin{tabular}{c}
\includegraphics[height=1in]{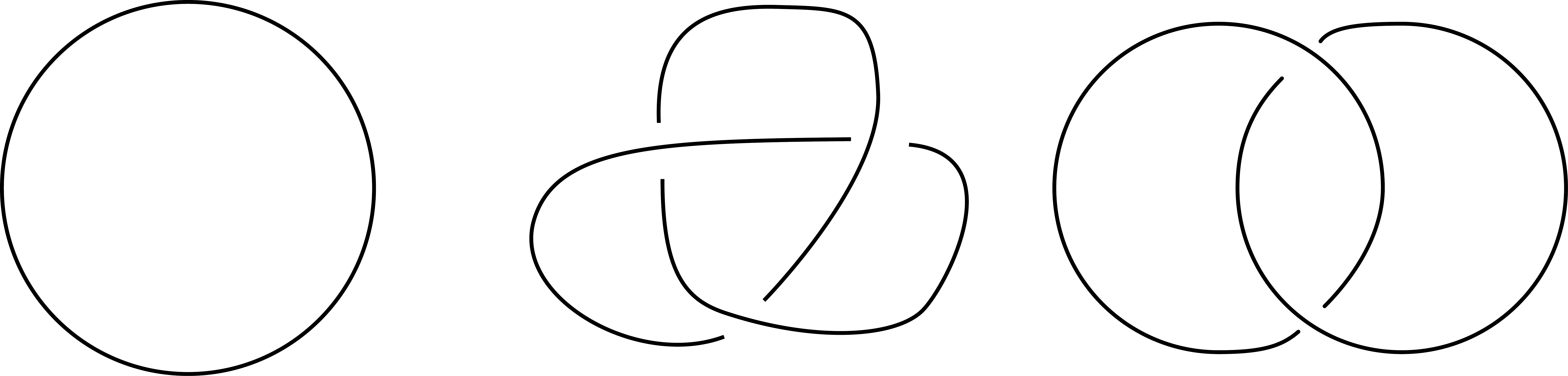}
\end{tabular}
\caption{The unknot, a trefoil knot, and the Hopf Link.}\label{fig:linkex}
\end{figure}

Two of the fundamental problems of the theory of knots and links is determining whether two diagrams represent equivalent links, and finding
a crossing-minimal drawing of a link. Both problems are hard in practice, P.G. Tait wrote in 1877 that ``it appears that the problem of finding all the absolutely distinct forms of knots with a given number of intersections is a much more difficult one than I at first thought''~\cite[pg. 315]{T1877}, and the sentiment still resonates even though many more knots and links have been classified in the meantime.

Since the initial investigations into knots and links much research has been focussed on finding improved solutions to the equivalence and crossing number problem---see, for example, Sections 2 and 5 of Lackenby's ``Elementary Knot Theory''~\cite{L17}, but as far as we know
there are no results formally showing that either of these problems is computationally hard.

In the current paper we show
that determining the link crossing number is \NP-hard. More precisely, given a link $L$ and an integer $k \in \NN$, determining
whether $c(L) \leq k$, where $c(L)$ is the smallest number of crossings in a link diagram representing a link equivalent to $L$ is
\NP-hard. Lackenby writes ``[a]lthough there seems little hope of anything more than using a brute-force search to determine the crossing number of an arbitrary link,''~\cite[Section 52]{L17}. Our main result gives a theoretical justification to this claim.

\begin{theorem}\label{thm:linkNPhard}
    Testing whether a link represented by a link diagram is equivalent to a link with at most $k$ crossings is \NP-hard.
\end{theorem}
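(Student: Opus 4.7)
The plan is a polynomial-time many-one reduction from a classical \NP-hard combinatorial problem --- a natural candidate is 3-\textsc{Sat}, a variant of \textsc{Partition}, or graph crossing number itself --- to link crossing number. Given an instance $I$, I would build a link diagram $D_I$ together with an integer $k_I$ so that $I$ is a yes-instance if and only if $c(L_I) \leq k_I$, where $L_I$ is the link represented by $D_I$. The link $L_I$ is assembled from local gadgets, each encoding one ingredient of $I$ (variables, clauses, items of a partition, etc.), glued together via linking or tangle concatenation so that choices in a solution of $I$ correspond to combinatorial choices in how the gadgets can be laid out in a planar diagram.

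I would proceed in three stages. First, design gadgets with a controllable and provably minimal crossing number, each admitting at least two essentially different realizations so that global diagram choices reflect the solving choices in $I$. Second, establish the upper bound: from a solution of $I$, exhibit a diagram of $L_I$ with at most $k_I$ crossings by drawing every gadget in its cheap realization and combining them. Third, establish the matching lower bound: every diagram of a link equivalent to $L_I$ uses at least $k_I$ crossings, with equality essentially only when the diagram encodes a solution of $I$.

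The main obstacle is the lower bound, since the crossing number is notoriously hard to bound from below --- this is the whole source of difficulty in the problem. The reduction must therefore target a family of links whose crossing number can be pinned down by an invariant. The strongest available tool is the first Tait conjecture, proved independently by Kauffman, Murasugi, and Thistlethwaite, which states that the crossing number of a reduced alternating link diagram equals the crossing number of the underlying link; Lickorish--Thistlethwaite extended this to adequate links via the Kauffman bracket. Building $D_I$ so that the assembled diagram is reduced alternating (or adequate) would then yield a matching lower bound almost for free. Supplementary tools include additivity of crossing number under split union, the linking-number lower bound $\sum_{i<j} 2\,|\mathrm{lk}(K_i,K_j)|$ on the number of mixed crossings between distinct components, and the breadth of the Jones polynomial. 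Most of the delicacy of the proof will lie in engineering the gadgets so that these properties survive after assembly, and in certifying that the ``wrong'' realizations of any single gadget strictly increase the total crossing count.

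For the \NP-completeness statement under weaker equivalence mentioned in the abstract, the same construction should carry over, with ambient isotopy replaced by a coarser equivalence relation --- perhaps generated by crossing changes, or by agreement of some polynomial invariant, or by complement homeomorphism --- for which equivalence admits a succinct certificate, placing the problem also in \NP.
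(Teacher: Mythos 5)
Your proposal is a plan rather than a proof---no gadgets are specified and both directions of the reduction are deferred---but the more serious issue is that the keystone you choose for the lower bound cannot play the role you assign it. If the diagram $D_I$ output by the reduction is reduced alternating (or adequate), then by Kauffman--Murasugi--Thistlethwaite the crossing number $c(L_I)$ equals the crossing count of $D_I$ itself, a number the reduction knows explicitly; the target question ``$c(L_I)\le k_I$'' then reduces to comparing two integers computable in polynomial time, so a correct many-one reduction from an \NP-hard problem onto such instances would give $\mathrm{P}=\mathrm{NP}$. The whole point of a hardness construction here is that the diagram you output is in general \emph{not} crossing-minimal, and whether a cheaper equivalent diagram exists is exactly where the hidden optimization must live; engineering the assembled diagram to be alternating collapses precisely that freedom. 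The same objection applies to any invariant (adequacy, Jones breadth) that pins down the crossing number of the constructed diagram exactly.

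The tool you list only as ``supplementary''---linking numbers---is what actually carries the paper's proof, and in a sharper, more robust form: \emph{parity} of linking numbers. Two components with odd linking number must cross at least twice in every diagram of every equivalent link (in fact of every link that merely agrees with it in linking numbers mod $2$), and this bound survives smoothing of self-crossings; that robustness is what lets one control \emph{all} diagrams, not just the one constructed. Concretely, the paper reduces from the fixed-order bipartite crossing number problem (degree-$4$ vertices on one side, degree-$1$ on the other), builds a rigid framework of chains of unknots whose crossing-minimal diagrams are unique up to homeomorphism, attaches guard and edge components whose pairwise linking forces the intended incidences, and uses weighted components---the weights later removed by cabling crossing-free unknots into parallel copies---so that any structural deviation costs more than the entire budget available for the graph-crossing part; a bipartite drawing of $G$ is then read off any crossing-minimal (indeed any parity-linking-number-equivalent) diagram. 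To salvage your plan you would need to replace the alternating-diagram step by per-pair crossing lower bounds of this kind and exhibit gadgets whose rigidity is certified by them, together with the weighting/cabling device (or a substitute) that makes violations provably too expensive.
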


The theorem shows that the link crossing number problem is hard, but it does not classify the complexity of the problem exactly. The main hurdle towards an upper bound for the link crossing number problem seems to be that, at least on the surface of it, it requires to check whether two link diagrams are equivalent. While it is known that the equivalence problem for link diagrams is decidable, the best computational upper bound is not elementary (the best upper bound on the number of Reidemeister moves turning one diagram into the other is a tower of exponentials of depth depending on the total number of crossings~\cite{CL16}, which is not an elementary function).

\begin{remark}
 Does the link crossing number problem remain \NP-hard if the number $k$ of crossings is fixed? For $k = 0$, testing $c(L) = 0$
 corresponds to recognizing the unknot (or unlink), as observed by Lackenby~\cite[Section 5.1]{L17}, which places the problem
 in $\NP \cap \coNP$~\cite[Section 3]{L17}, making it unlikely to be \NP-hard (since then $\NP = \coNP$ and the polynomial hierarchy collapses).
 There is no link $L$ with $c(L) = 1$. 

 More generally, if $k$ is a fixed number, then, by a result announced by Lackenby~\cite[Section 4.4]{L17}, there is a polynomial $p$ so that a link diagram with $n$ crossings is equivalent to a link diagram with at most $k$ crossings if and only if there is a sequence of at most $p(n)$ Reidemeister moves turning the diagram with $n$ crossings into a diagram with at most $k$ crossings. This implies that for any fixed $k$, the problem of testing whether $c(L) \leq k$ lies in \NP. Since $p$ depends on $k$, this is no longer true if $k$ is not bounded.
\qed\end{remark}

We can weaken the notion of link equivalence to obtain various versions of the link crossing number problem which are still \NP-hard, and some will be \NP-complete. These results will be a corollary to the proof of Theorem~\ref{thm:linkNPhard}. We use the notion of linking number
, defined in Section~\ref{sec:LW}.

We introduce the following link equivalencies: two links $L$ and $L'$ are
\begin{itemize}
 \itemi {\em parity-linking-number equivalent} if there is a bijection between the components of $L$ and $L'$ so that the parity of the linking number between any two components in $L$ is the same as the parity of the linking number between the corresponding components of $L'$ (orientation does not matter because of parity),
 \itemii {\em linking-number equivalent} if there is a bijection between the components of $L$ and $L'$ so that the linking number between any two components in $L$ is the same as the linking number between the corresponding components of $L'$ (for appropriate orientations),
 \itemiii {\em link-homotopic} if there exists a homotopy between $L$ and $L'$ during which each component of $L$ is allowed to cross itself but such that no two distinct components are allowed to intersect.
 \itemiv {\em link-concordant} if there exists an embedding $f:L \times [0,1] \rightarrow S^3 \times [0,1]$ such that $f(L \times \{0\})=L$ and $f(L \times \{1\}) = L'$.
\end{itemize}

Linking-number and parity-linking number equivalence are natural relaxations of link isotopy where one focuses on preserving the homology (respectively homology mod $2$) of the components with respect to each other. Link homotopy was introduced by Milnor~\cite{M54} as a coarser way to compare and understand links. Link concordance was first introduced for knots by Fox and Milnor~\cite{FM66} and endows links with a stronger algebraic structure than the usual notion of isotopy. It is easy to see that link homotopy and link concordance preserve linking numbers, and thus parity-linking-number equivalence is coarser than all the other equivalence relations.

\begin{corollary}\label{cor:linkNPC}
Testing whether a link represented by a link diagram is link-homotopic or concordant to a link with crossing number at most $k$ is \NP-hard. Testing whether a link represented by a link diagram has a linking-number/parity-linking number equivalent link diagram with at most $k$ crossings is \NP-complete. 
\end{corollary}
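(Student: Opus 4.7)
The plan is to observe that the reduction used to prove Theorem~\ref{thm:linkNPhard} already establishes \NP-hardness for all four weaker equivalence relations, and then to separately verify that the two linking-number versions admit polynomial-size certificates.

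For the hardness portion, I would inspect the construction behind Theorem~\ref{thm:linkNPhard}: the link $L$ produced by the reduction should be designed so that its crossing number is forced from below by the (parities of the) linking numbers between its components. More precisely, the reduction comes with a computable threshold $k$ such that any link diagram whose pairwise linking numbers have the same parities as those of $L$ must use at least $k$ crossings, while an explicit diagram of $L$ meets that bound. Since parity-linking-number equivalence is the coarsest of the four relations---link concordance, link-homotopy, and linking-number equivalence all preserve linking numbers and hence their parities, as noted after the definitions---the same lower bound applies verbatim under each weaker relation. Thus the reduction simultaneously yields \NP-hardness of the link crossing number under link-homotopy, link-concordance, linking-number, and parity-linking-number equivalence.

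For the membership portion, consider an instance $(D,k)$ where $D$ is a diagram with $n$ crossings; we may assume $k \le n$, otherwise $D$ itself is a witness. A certificate consists of a link diagram $D'$ with at most $k$ crossings together with a bijection $\phi$ between the components of $D$ and of $D'$; both objects have size polynomial in $n$. Verification proceeds by computing, in polynomial time, the linking number of each unordered pair of components of $D$ and of $D'$ using the standard signed-crossing formula, and then checking that $\phi$ matches these linking numbers up to sign (to account for possible orientation reversals of individual components) in the linking-number version, or matches their parities in the parity-linking-number version. Combined with the \NP-hardness established above, this places both linking-number crossing number problems in \NP, and hence makes them \NP-complete.

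The main obstacle---already hinted at in the introduction---is that a reduction proving hardness for the strict isotopy version of the problem does not automatically prove hardness for any of the weaker versions, since relaxing the equivalence relation can only decrease the crossing number. Hardness must therefore be established by ensuring that the lower bound itself is invariant under the relaxation, which relies on the detailed structure of the Theorem~\ref{thm:linkNPhard} reduction: the crossing number lower bound is witnessed purely by the parity linking matrix of the constructed link. Membership in \NP for the link-homotopy and link-concordance versions is not claimed, since neither relation is known to admit a polynomial-size equivalence certificate.
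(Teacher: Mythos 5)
Your argument is essentially the paper's own proof: the hardness side rests on the observation that the converse direction of the Theorem~\ref{thm:linkNPhard} reduction only uses parity-linking-number equivalence, while the forward direction produces an honest diagram of $L'$, so the equivalence with $\bcr(G)\le c$ holds simultaneously for every relation refining parity-linking-number equivalence; membership for the two linking-number versions is by guessing a small diagram and comparing linking numbers, exactly as in the paper. One small correction to your verifier: checking linking numbers \emph{pairwise up to sign} is strictly weaker than the definition's ``for appropriate orientations,'' because reversing a single component flips the signs of all linking numbers involving it in a correlated way (e.g.\ pairwise values $(1,1,1)$ versus $(1,1,-1)$ on three components agree up to sign but are not realizable by reorientation); the fix is to include a choice of orientations (one sign per component) in the certificate, after which the check is exact and still polynomial.
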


One way to interpret the corollary is that the parity-linking structure of a link is what makes the link crossing number problem \NP-hard in our proof.

The paper continues with a section on preliminaries on graph crossing number and knots, Section~\ref{sec:P}, followed by Section~\ref{sec:NPH} containing proofs of Theorem~\ref{thm:linkNPhard}, as well as Corollary~\ref{cor:linkNPC}.

\section{Preliminaries}\label{sec:P}

\subsection{Graph Crossing Number}

To show that the link crossing number problem is \NP-hard, we use an \NP-complete crossing number problem for {\em graphs}. In a {\em drawing} of a graph $G$, vertices of $G$ are placed at distinct locations in the plane, and every edge is drawn as a simple curve connecting its endpoints. We assume that edges do not pass through vertices, at most two edges are involved in any crossing, and no two edges touch (intersect without crossing).

For a bipartite graph $G = (U \cup V, E)$ with $E \subseteq U \times V$, a {\em bipartite drawing} is a drawing of $G$ in which all vertices of $U$ lie on one line, all vertices of $V$ on a parallel line, and all edges lie strictly between those two lines. For a sample bipartite graph and a drawing, with $16$ crossings, see Figure~\ref{fig:bipex}. The {\em bipartite crossing number} of $G$, $\bcr(G)$, is the smallest number of crossings in any bipartite drawing of $G$. The bipartite crossing number problem is \NP-complete, and remains so under various restrictions (see entry on bipartite crossing number in~\cite{S13}). To simplify the reduction, we work with a version of the problem in which $G$ is the disjoint union of $K_{1,4}$ graphs, $V$ consists of all degree $1$ vertices, and the order of the $V$-vertices along their line is fixed.

\begin{theorem}[Mu{\~{n}}oz,  Unger, and Vr{\v t}o~\mbox{\cite[Theorem 1]{MUV02}}]
  Determining the bipartite crossing number of a bipartite graph $G = (U \cup V, E)$ in which all vertices in $U$ have degree $4$, all vertices in $V$ have degree $1$, and the order of the $V$-vertices along their line is fixed, is \NP-complete.
\end{theorem}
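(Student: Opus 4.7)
The plan is to prove both directions of \NP-completeness. Membership in \NP\ is straightforward: once the order of the $V$-vertices along their line is prescribed, a bipartite drawing of $G$ is determined (up to crossing count) by a permutation of the $U$-vertices along the parallel line. A nondeterministic algorithm guesses such a permutation $\pi$ and then counts crossings in time $O(|E|^2)$, since two edges $(u_i,v_j)$ and $(u_k,v_\ell)$ cross in a bipartite drawing precisely when their endpoints interleave.

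For \NP-hardness, I would reduce from the general two-layer bipartite crossing number problem with one side fixed, which is classically \NP-hard. Given an arbitrary bipartite graph $G = (U \cup V, E)$ with a prescribed $V$-order, I would successively transform it into the target format. First, to achieve degree $1$ on $V$, replace each $V$-vertex $v$ of degree $d$ by $d$ consecutive new $V$-vertices that split $v$'s incident edges among them; since edges sharing an endpoint never interleave in a bipartite drawing, a careful matching of the split vertices to the original edges preserves the crossing count. Second, to achieve degree exactly $4$ on $U$, replace every $U$-vertex $u$ of degree $d_u$ by $\lceil d_u/4 \rceil$ copies $u_1,\ldots,u_{t_u}$ that partition the incident edges among themselves, and pad each copy up to degree $4$ with dummy pendants attached to a fresh contiguous block $B_u$ of dummy $V$-vertices inserted into the $V$-order. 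After these two steps, the graph is a disjoint union of $K_{1,4}$'s in which every $U$-vertex has degree $4$ and every $V$-vertex has degree $1$, as required.

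The main obstacle is to enforce, through these gadgets, that in any optimum bipartite drawing all copies $u_1,\ldots,u_{t_u}$ of a given original vertex $u$ appear consecutively along the $U$-line (in some order), so that they collectively emulate a single vertex of the original instance. A standard way to achieve this is to pre-multiply the edge-multiplicities of $G$ by a polynomial factor $N$ large enough that any splitting of the copies of $u$ would incur a pendant-vs.-real-edge crossing penalty strictly exceeding the total number of real-edge crossings in the original instance. The delicate calculation is then to show that the total contribution of the dummy pendants to the crossing number depends only on the ordering of the gadgets along the $U$-line, not on the internal arrangement within each gadget, so that subtracting this deterministic additive constant recovers $N^{2}\cdot\bcr(G)$ exactly, completing the reduction.
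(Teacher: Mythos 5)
This statement is not proved in the paper at all: it is quoted verbatim as an external result of Mu\~noz, Unger and Vr\v{t}o~\cite[Theorem~1]{MUV02}, and serves only as the \NP-hard base problem for the reduction in Section~\ref{sec:NPH}. So there is no internal proof to compare against; your sketch has to stand on its own, and as it stands it has genuine gaps. The \NP-membership part is fine, and starting from the Eades--Wormald one-sided crossing minimization problem is a legitimate base, but both transformation steps assert exactly the points that are hard. First, splitting a $V$-vertex $v$ of degree $d$ into $d$ consecutive degree-$1$ copies does not automatically preserve the crossing count: in the original instance, edges sharing the endpoint $v$ never cross, but after the split, edges attached to different copies of $v$ can cross one another, and whether they do depends on the (unknown) optimal permutation of $U$; the fixed assignment of edges to copies that you build into the instance can be suboptimal for some $U$-orders, so the optimum can shift by up to $\sum_v d_v^2$, and you give no argument controlling this.

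Second, the consecutivity of the copies $u_1,\ldots,u_{t_u}$ and the claim that the dummy pendants contribute a crossing count depending only on the gadget order are precisely the crux, and you explicitly leave them as ``the delicate calculation.'' They are not routine: pendant edges of $u_i$ into the block $B_u$ cross real edges assigned to sibling copies $u_j$ (so the internal arrangement does matter), and real edges of other gadgets cross $B_u$-pendants in numbers depending on the global interleaving, so the ``deterministic additive constant'' claim needs a genuine forcing argument. Moreover, your proposed fix---pre-multiplying edge multiplicities by a large factor $N$---is incompatible with the target format: once every $U$-vertex must have degree exactly $4$ and every $V$-vertex degree $1$, you cannot scale multiplicities without creating yet more copies, which re-introduces the very consecutivity problem the scaling was meant to solve, at a level where the penalty bookkeeping is no longer obviously dominated. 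In short, the architecture (reduce from the general one-sided problem, then sparsify) is plausible but the two preservation claims it rests on are unproved and, as stated, not clearly true; the cited result of~\cite{MUV02} is obtained by a different, self-contained reduction and cannot be recovered by hand-waving these steps.
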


\subsection{Linking and Weights}\label{sec:LW}

Two oriented knots have a {\em linking number}, a link invariant first introduced by Gauss. For a given link diagram with two oriented components $K_1$ and $K_2$, the linking number is computed as follows: trace one of the components (in the direction of its orientation); start with a count of $0$; every time you cross $K_2$: add $1$ if $K_2$ crosses under you from right to left, or $K_2$ crosses above you, from left to right. Otherwise, subtract $1$. The resulting number divided by two is the linking number of $K_1$ and $K_2$ and it is a link-invariant, that is, it is the same for all diagrams realizing the oriented link.

Suppose that a component $K$ in a link diagram has a self-crossing. We can remove the self-crossing by {\em smoothing} it, i.e., cutting the component at the self-crossing, and reconnecting the ends locally so as to avoid the crossing. There are two ways of reconnecting the ends, but only one will result in a single component $K'$, we call that one the {\em connected smoothing}. If the component is oriented, we must reverse the orientation of part of the component when reconnecting it. There are two ways of doing this, but both result in the same underlying knot, with reverse orientations. In general, $K$ and $K'$ will not be isotopic, but we do have the following lemma.

\begin{lemma}\label{lem:unknot}
   Given a link diagram realizing a link consisting of two oriented components $K$ and $K'$, the linking number of the two components does not change modulo $2$ if $(i)$ the orientation of either or both components is changed, or $(ii)$ if one does a connected smoothing.
\end{lemma}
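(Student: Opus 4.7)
The plan is to reduce both claims to a one-sided formula for the linking number. Let $X$ denote the set of crossings in the diagram at which $K$ passes over $K'$, and let $\epsilon(c)\in\{+1,-1\}$ denote the sign of a crossing $c$. Starting from the standard definition $\mathrm{lk}(K,K')=\tfrac{1}{2}\sum_{c}\epsilon(c)$ summed over all inter-component crossings, I would first record the refinement
\[
\mathrm{lk}(K,K') \;=\; \sum_{c\in X} \epsilon(c),
\]
which follows, for instance, by combining the symmetry $\mathrm{lk}(K,K')=\mathrm{lk}(K',K)$ with the fact that each of the two half-sums (over-crossings of $K$, respectively over-crossings of $K'$) equals the full linking number. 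Reducing modulo $2$ and using $\pm 1\equiv 1\pmod{2}$ then yields the key identity
\[
\mathrm{lk}(K,K') \;\equiv\; |X| \pmod{2}.
\]

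For part $(i)$, whether a given crossing lies in $X$ depends only on which strand is depicted as passing over the other, and this over/under data is independent of the chosen orientations. Hence $|X|$, and therefore the parity of $\mathrm{lk}(K,K')$, is unaffected by reversing the orientation of either or both components.

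For part $(ii)$, a connected smoothing is a purely local modification of the diagram: it alters the picture only inside an arbitrarily small disk $D$ around a self-crossing of (say) $K$. Since that self-crossing involves two strands of $K$ alone, $D$ contains no inter-component crossing, so the smoothed diagram has exactly the same set of $K$-versus-$K'$ crossings, with the same over/under assignments, as the original. The count $|X|$ is thus preserved, and hence so is the parity of the linking number.

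The only step requiring genuine care is the justification of the one-sided formula; this is a classical fact but worth flagging. Once it is in hand, both $(i)$ and $(ii)$ reduce to elementary observations about the orientation-independence and the locality of the combinatorial quantity $|X|$.
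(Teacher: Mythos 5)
Your proof is correct in substance, but it follows a genuinely different route from the paper's. You reduce both claims to the classical one-sided formula $\mathrm{lk}(K,K')=\sum_{c\in X}\epsilon(c)$, whose mod-$2$ reduction $\mathrm{lk}(K,K')\equiv|X|\pmod 2$ turns the parity of the linking number into an unoriented, purely local diagrammatic count; $(i)$ and $(ii)$ then follow because $|X|$ ignores orientations and is untouched by a smoothing performed inside a disk containing only a self-crossing of $K$. The paper instead works directly from the half-sum definition: for $(i)$ it observes that reversing orientations at most flips the sign of the linking number, and for $(ii)$ it splits $K$ at the self-crossing into two pieces $K_1,K_2$ and notes that the connected reconnection reverses the orientation along $K_1$, changing the linking number with $K'$ by $2\,\mathrm{lk}(K_1,K')$, which vanishes mod $2$. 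Your route buys a one-line treatment of $(ii)$ --- in particular it makes transparent why the orientation reversal hidden in a connected smoothing is harmless mod $2$ --- at the price of importing the one-sided formula; the paper's argument is a bit longer for $(ii)$ but entirely self-contained from its definition of linking number.

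One caveat: your justification of the one-sided formula is circular as written --- you derive it from ``the fact that each of the two half-sums equals the full linking number,'' which is precisely the formula being asserted. Since this is the only nontrivial ingredient of your proof, you should either cite it properly or prove it; an elementary argument is to change inter-component crossings so that $K$ always passes over $K'$ (each such change alters both half-sums by the same amount), and to note that in the resulting split diagram both half-sums are zero, hence they are equal in the original diagram. With that patched, your argument is complete.
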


The lemma allows us to remove self-crossings in link diagrams, as long as we are only interested in maintaining the link crossing number modulo $2$ as an invariant.

\begin{proof}
 Changing the orientation of a component does not affect the absolute value of the linking number, so $(i)$ is true in particular.
 To see $(ii)$ we perform a connected smoothing in two steps. Suppose we have $K$ and $K'$ so that $K$ has a self-crossing. Remove that self-crossing of $K$ and reconnect the ends at the self-crossing so $K$ now consists of two components $K_1$ and $K_2$, both maintaining the original orientation of $K$. Suppose $K'$ has linking number $k \in \ZN$ with $K_1$. If we reconnect $K_1$ to $K_2$ at the former self-crossing so that $K_1$ and $K_2$ form a new, single component, with the orientation along $K_1$ reversed, the linking number of $K'$ with the former $K_1$-part changes from $k$ to $-k$, while the linking number with $K_2$ remains unchanged. Therefore, the overall linking number changes by $2k$, which is zero modulo $2$.
\end{proof}

We say two components in a link are {\em unlinked} (also called {\em split}) if the link is equivalent to a link diagram in which the components are disjoint. Otherwise, the knots are {\em linked}.

Since two disjoint components have linking number $0$, we have the following result.

\begin{lemma}
 If the linking number of two components is different from $0$ modulo $2$, then the two components are {\em linked}.
\end{lemma}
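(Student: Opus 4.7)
The plan is to prove the contrapositive: if two components $K_1$ and $K_2$ of a link are unlinked (split), then their linking number is $0$, hence in particular $0$ modulo $2$. Since the statement in the lemma is a direct implication, contraposition is the most natural route because the hypothesis ``unlinked'' gives us a concrete diagram to work with for computing the linking number.

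First I would unfold the definition of unlinked from the paragraph preceding the lemma: $L$ is equivalent to a link diagram $D$ in which $K_1$ and $K_2$ are drawn disjointly. In such a diagram $D$, there are no crossings involving both $K_1$ and $K_2$, so when we apply the linking number computation described in Section~\ref{sec:LW} (trace $K_1$ and accumulate $\pm 1$ for every crossing with $K_2$), the sum is empty. Hence the linking number computed from $D$ is $0/2 = 0$.

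Next I would invoke the fact, stated in the same section, that the linking number is a link invariant: it takes the same value in every diagram of the oriented link. Therefore the linking number of $K_1$ and $K_2$ is $0$ in any diagram representing $L$, and in particular it is $0 \pmod 2$. Contrapositively, if the linking number is nonzero modulo $2$, then no diagram presents $K_1$ and $K_2$ disjointly, so the two components are linked.

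There is no real obstacle here; the only subtlety is that the definition of linking number given in the text involves a choice of orientation, while the notion of being unlinked is orientation-free. This is harmless because part~$(i)$ of Lemma~\ref{lem:unknot} (already established) tells us that changing orientations does not affect the linking number modulo $2$, so the parity we use in the statement is well-defined regardless of how the components are oriented.
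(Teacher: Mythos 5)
Your proof is correct and is essentially the paper's argument: the paper simply notes that two disjoint components have linking number $0$ and that the linking number is a diagram invariant, which is exactly your contrapositive, spelled out in slightly more detail (including the harmless orientation remark).
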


It is easy to construct two linked unknots whose linking number is $0$, so the converse of the lemma is not true.

We also want to assign a linking number to two unoriented knots; to avoid confusion we call this the {\em unsigned linking number}, and we define it as the absolute value of the linking number of the two knots, for any orientation of the knots (since changing the orientation of a link only affects the sign of the linking number, this notion is well-defined).

To simplify the proof of the main result, we work with {\em weighted} links, that is, each component $K$ of the link $L$ is assigned an (integer) weight $w(K) > 0$. A crossing between a component of weight $w$ with a component of weight $w'$ is assigned a weight of $ww'$ (in the case of self-crossings, the two components are the same, so $w = w'$). The {\em weighted link crossing number} $c(D)$ of a weighted link diagram $D$ is the sum of the weights of all its crossings. For a weighted link $L^w$ given by a link diagram $D$, the {\em weighted link crossing number} is the smallest weighted link crossing number of any diagram equivalent to $D$.

While weights simplify the argument, we also need to be able to get rid of them. In graph drawing it is typically easy to replace a weighted curve by multiple, close, copies of the curve. With links we have the added complication that if a component contains a self-crossing (which it must if
it is not the unknot), replacing it with multiple copies of the same component close-by will force crossings between the components, which will
introduce crossings, throwing off the accounting. For this reason, we work with links made up of unknots drawn without self-crossings.

\begin{lemma}\label{lem:unweight}
 Given a weighted link $L^w$ with a link diagram $D$ in which all components are free of self-crossings (in particular, all components are unknots), we can (efficiently) construct a link $L'$, without weights, so that $c(L') \leq c(D)$ and there is a mapping $\alpha$ from the components of $L'$ to the components of $L^w$ so that $K$ and $K'$ in $L'$ are linked if and only if $\alpha(K)$ and $\alpha(K')$ are linked in $L^w$ and
 the linking number of $\alpha(K)$ and $\alpha(K')$ is the same as the linking number of $K$ and $K'$. Furthermore, the isotopy class of $L'$ does not depend on the link diagram $D$ and only on the weighted link $L^w$.
\end{lemma}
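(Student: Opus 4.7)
The natural idea is to replace each component $K$ of weight $w(K)$ by $w(K)$ parallel copies running very close to $K$, and take $L'$ to be the resulting unweighted link, with $\alpha$ mapping each copy back to its parent component. Since every component of $L^w$ is drawn in $D$ as an unknot without self-crossings, it bounds an embedded disk in the plane, so we can realize the $w(K)$ copies as pairwise disjoint nested curves in a thin planar tubular neighborhood of $K$, each inheriting the orientation of $K$. Call the resulting unweighted diagram $D'$.

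\emph{Crossing count.} At each crossing in $D$ between a weight-$w$ component and a weight-$w'$ component, the cabling produces exactly $ww'$ crossings in $D'$, all inheriting the over/under and sign information of the original; the parallel copies of a single component never cross each other because they are nested in the plane. Summing over the crossings of $D$ yields exactly $c(D)$ crossings in $D'$, so $c(L') \leq c(D)$.

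\emph{Linking and linkedness.} For any two components $K, K'$ of $L'$, each original crossing between $\alpha(K)$ and $\alpha(K')$ in $D$ contributes one crossing of the same sign between $K$ and $K'$ in $D'$, so the linking numbers of $(K, K')$ and $(\alpha(K), \alpha(K'))$ coincide. If $\alpha(K) = \alpha(K')$, the two curves are distinct parallel copies of a planar unknot, bound disjoint disks, and are therefore unlinked, which is consistent with the fact that a component is never considered linked to itself. If $\alpha(K) \neq \alpha(K')$, a small ambient isotopy moves $K$ and $K'$ onto $\alpha(K)$ and $\alpha(K')$, so they are linked if and only if $\alpha(K)$ and $\alpha(K')$ are linked.

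\emph{Diagram independence.} The only step that is not immediate is the last claim: that the isotopy type of $L'$ depends on $L^w$ alone. I would phrase the construction intrinsically as taking the $w(K)$-fold cable of each component $K$ with respect to its canonical $0$-framing (which is well-defined since each component is an unknot). Since $D$ is self-crossing-free, the writhe of each component in $D$ is zero, so the blackboard framing induced by $D$ coincides with the $0$-framing, and the diagrammatic cabling of $D$ realizes this intrinsic construction. The main obstacle is precisely this identification of the blackboard cabling with the intrinsic $0$-framed cabling: once that is granted, isotopy invariance follows from the standard fact that framed cabling is an isotopy invariant of framed links.
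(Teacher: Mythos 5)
Your proof is correct and follows essentially the same route as the paper: replace each weight-$w(K)$ component by $w(K)$ parallel copies (possible without extra crossings since the components are drawn as crossing-free unknots), read off the crossing count, linking numbers, and linkedness from the cabled diagram, and justify diagram-independence by identifying the construction with the canonical ($0$-framed/unlink) cabling. Your writhe-zero argument just makes explicit the paper's remark that the parallel copies form an unlink and that such a cabling is unique up to isotopy.
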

\begin{proof}
  By assumption, all the components of $L^w$ are drawn as crossing-free unknots in $D$. Construct $L'$ by replacing each component $K$ of $L^w$ with $w(K)$ parallel (concentric) copies. Since $K$ is drawn without crossings, we can draw each copy very close to $K$ and so that it behaves exactly the same way as $K$ with respect to all other components. The resulting drawing witnesses that $c(L') \leq c(D)$. The mapping $\alpha$ assigns to each copy in $L'$ its original in $L^w$.

  The isotopy class of $L'$ does not depend on the link diagram $D$: topologically, the construction amounts to locally replacing (cabling) each component $K$ of $L^w$ with $w(K)$ parallel copies so that they form a $w(K)$-component unlink, and there is a unique (up to isotopy) way to do so.
\end{proof}

\subsection{Framework Gadgets}

For the construction we need a device for enforcing order in a link diagram. We make use of a linked chain of unknots: an {\em $n$-chain} is a link consisting of $n$ unknots equivalent to the link pictured in Figure~\ref{fig:nchain}, for the case $n=4$, up to switching both the crossing types of successive crossings of two adjacent unknots. More precisely, an $n$-chain is a link with $n$ components that has a diagram with $2n$ crossings so that 1) no component has a self-crossing, 2) the $i$-th and $(i+1)st$ (modulo $n$) unknot have two crossings, one over and one under (in particular they have unsigned number $1$), and 3) there are no other crossings.
See Figure~\ref{fig:nchain} for a crossing-minimal drawing of a $4$-chain.

\begin{figure}[htb]
  \centering
\begin{tabular}{c}
\includegraphics[height=2in]{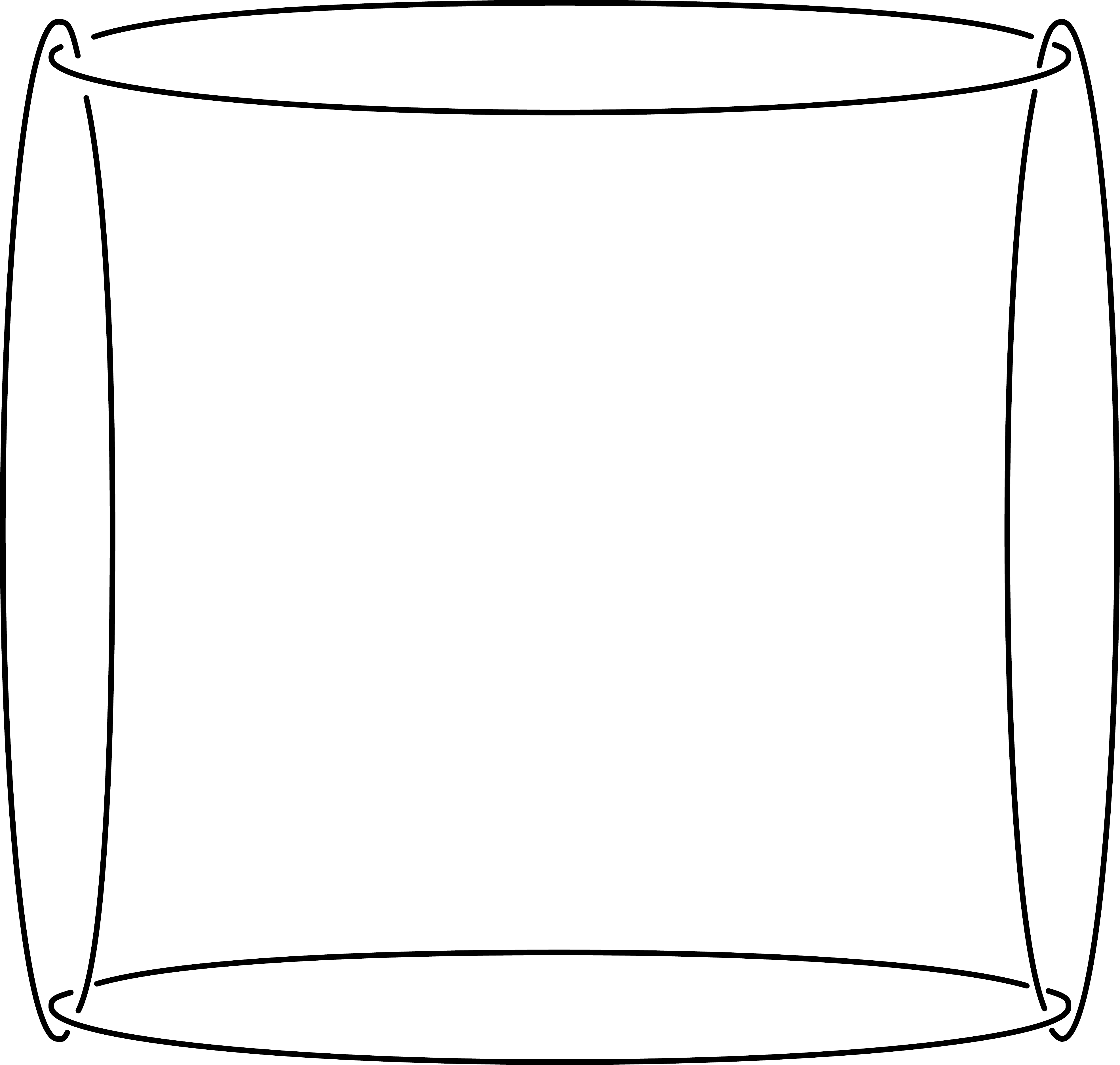}
\end{tabular}
\caption{An $n$-chain, for $n = 4$.}\label{fig:nchain}
\end{figure}

\begin{lemma}\label{lem:orderlink}
   Let $L$ be an $n$-chain, $n \geq 4$. Then $c(L) = 2n$. If $D$ is a crossing-minimal link diagram of $L$, then
   \begin{itemize}
    \itemi any two consecutive components (in the chain) cross each other exactly twice,
    \itemii there are no other crossings,
    \itemiii the two crossings a component has with another component are consecutive; that is, along each component, the four crossings with its two neighboring components do not interleave, and
    \itemiv there are two faces of $D$ each bounded by $n$ arcs, one arc from each component of $L$, occurring in the reverse order as they do along the chain. We call these the inner and outer face of $L$.
   \end{itemize}
\end{lemma}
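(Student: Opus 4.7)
The plan is to establish $c(L) = 2n$ via a linking-number lower bound, and then peel off the structural properties one at a time.

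\emph{Lower bound and properties (i), (ii).} The defining diagram witnesses $c(L) \le 2n$. For the matching lower bound, observe that each pair of consecutive components $K_i, K_{i+1}$ has unsigned linking number exactly $1$ by the definition of an $n$-chain, and the unsigned linking number is a link invariant. Since the unsigned linking number between two components is at most half the number of crossings between them in any diagram, each consecutive pair contributes at least $2$ crossings. Summing over the $n$ consecutive pairs gives $c(L) \ge 2n$, hence $c(L) = 2n$. In a crossing-minimal diagram, equality forces each consecutive pair to cross exactly twice and forbids any other crossings (including self-crossings and crossings between non-consecutive components), which is precisely (i) and (ii).

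\emph{Proof of (iii).} By (ii) each $K_i$ is drawn as a Jordan curve in the plane with exactly four crossings, two with each neighbor. Suppose for contradiction that along $K_i$ the four crossings interleave in cyclic order $a_1, b_1, a_2, b_2$, with $a_j \in K_i \cap K_{i-1}$ and $b_j \in K_i \cap K_{i+1}$. Since $K_{i-1}$ is a Jordan curve transverse to $K_i$ at its two points $a_1, a_2$, one of the two subarcs of $K_{i-1}$ bounded by $a_1,a_2$ lies entirely inside the open disk $D_i$ bounded by $K_i$; similarly for the arc of $K_{i+1}$ between $b_1$ and $b_2$. These two arcs form chords of $D_i$ whose endpoints interleave on $\partial D_i$, so they must intersect, producing a crossing between $K_{i-1}$ and $K_{i+1}$. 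But for $n \ge 4$ these are non-consecutive and so disjoint by (ii), a contradiction.

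\emph{Proof of (iv).} Using (iii), each $K_i$ decomposes into four arcs: the ``homogeneous'' arcs $\alpha_i^-$ (between the two $K_{i-1}$-crossings) and $\alpha_i^+$ (between the two $K_{i+1}$-crossings), plus two ``transition'' arcs joining the two clusters. At each crossing between $K_i$ and $K_{i+1}$, the four local corners correspond bijectively to the four local regions cut out by the two transverse Jordan curves (the lens $D_i \cap D_{i+1}$, its two one-sided complements, and the local exterior); inspection of the sides of each arc shows that the lens corner is bounded by $\alpha_i^+$ and $\alpha_{i+1}^-$, while the opposite exterior corner is bounded by the two transition arcs at this crossing. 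Following the face incident to such an exterior corner, the boundary walks along a transition arc of $K_i$, arrives at a $K_i \cap K_{i-1}$ crossing, passes into the transition arc of $K_{i-1}$ at its exterior corner, and continues. The walk visits one transition arc from each successive component in reverse chain order and closes up into a cycle; the $2n$ exterior corners split into two such cycles of $n$ corners, producing two faces each bounded by $n$ arcs as required.

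The most delicate step is verifying that the $2n$ exterior corners group into \emph{exactly} two face cycles of length $n$, rather than one cycle of length $2n$ (or some other partition), and in particular that pathological planar arrangements such as nested non-consecutive components cannot arise. This is a planar topological uniqueness statement; one way to secure it is to combine the Euler-formula count $V = 2n$, $E = 4n$, $F = 2n+2$ with the fact that the diagram is a connected planar graph (so each face is a topological disk) to force the arrangement of the $n$ Jordan curves to match the standard chain up to homeomorphism of the plane.
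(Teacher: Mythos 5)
Your treatment of $c(L)=2n$ and of items $(i)$--$(iii)$ is correct and follows essentially the same route as the paper: the lower bound comes from the invariance of the (unsigned) linking number of consecutive components, which forces at least two crossings per consecutive pair, and $(iii)$ is the same interleaving-implies-crossing argument (your chord-in-the-disk justification just makes explicit what the paper asserts in one sentence).

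The problem is item $(iv)$, and it is exactly the step you yourself flag. Your face-walk along the ``exterior corners'' of the transition arcs is the right picture (it parallels the paper's argument, which observes that the two crossings between consecutive components bound a bigon and that, after deleting the bigon arcs, the remaining $2n$ transition arcs organize into two closed boundary walks, each meeting every component once in chain order). But you never prove that the walk starting on a transition arc of $K_i$ must continue onto a transition arc of $K_{i-1}$ and, crucially, that after $n$ steps it returns to the \emph{same} transition arc of $K_i$ rather than to the other one -- a priori the $2n$ transition arcs could close up into a single face of degree $2n$, or the face boundary could revisit arcs of the same component, and ruling this out is precisely the content of $(iv)$. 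Your proposed repair does not fill this hole: the Euler count $V=2n$, $E=4n$, $F=2n+2$ together with connectedness is consistent with many hypothetical face-degree distributions (you would first need to know, for instance, that each consecutive pair bounds an empty bigon face and that each component bounds a quadrilateral face with its two neighbours before the count pins down the remaining two faces), so invoking it ``to force the arrangement of the $n$ Jordan curves to match the standard chain'' is asserting the uniqueness statement rather than proving it. To complete the argument you need a genuine combinatorial/topological step here -- e.g.\ first establishing the bigon faces from $(ii)$ and $(iii)$ and then analyzing how the face incident to the two transition arcs at each such crossing pair must propagate around the chain, as the paper does in compressed form.
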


In other words, the diagram shown in Figure~\ref{fig:nchain} is unique (up to a homeomorphism of the surface, and a change in crossing types).

\begin{proof}
    Any two consecutive components of the chain have to cross, and, since they are closed curves, have to cross at least twice,
    so the $2n$ crossings shown in Figure~\ref{fig:nchain} are unavoidable. Since we are assuming that $c(D) = c(L)$, there
    can be no additional crossings in $D$, establishing both $(i)$ and $(ii)$. In particular, $c(L) = 2n$, and there are no self-crossings in $D$.

    Since $n > 2$, each component is involved in four crossings along its boundary with two other components along the chain, say $H$ and $H'$. If the four crossings interleave, then $H$ and $H'$ must cross each other, which means, by $(ii)$, that they must be consecutive along the chain; since they both cross the same component this is not possible, since we assumed that $n \geq 4$. This establishes $(iii)$.

    The two crossings between two consecutive components form a bigon. If we remove all the arcs involved in such bigons, we are left with two disjoint sequences of arcs, both in the same order as along the chain, and both bounding a (different) face of the diagram.
\end{proof}

We make use of $n$-chains to build the framework for the reduction. Lemma~\ref{lem:chainframe} shows how we connect layers of the framework.

Suppose we have a sequence $L_1, \ldots, L_k$ of $k$ chains of length $4$, and one chain, $L_0$, of length $\ell \geq 4$. Let $L_{i,1}$, $L_{i,2}$, $L_{i,3}$, $L_{i, 4}$ be the components in chain $L_i$ for $1 \leq i\leq k$ and, for $i =0$, any four consecutive components of $L_i$. Add an unknot linking $L_{i,j}$ to $L_{i+1, j}$ for every $1 \leq i < k$, and $1 \leq j \leq 4$ in the way pictured in Figure~\ref{fig:chainframe}. We call the resulting link ${\mathcal L}(k, \ell)$ a {\em framework}, the components belonging to chains are {\em rings} and the additional unknots linking them we call {\em hinges}. See Figure~\ref{fig:chainframe} for a sample framework with $k = 3$, and $\ell = 6$.

\begin{figure}[htb]
  \centering
\begin{tabular}{c}
\includegraphics[height=3in]{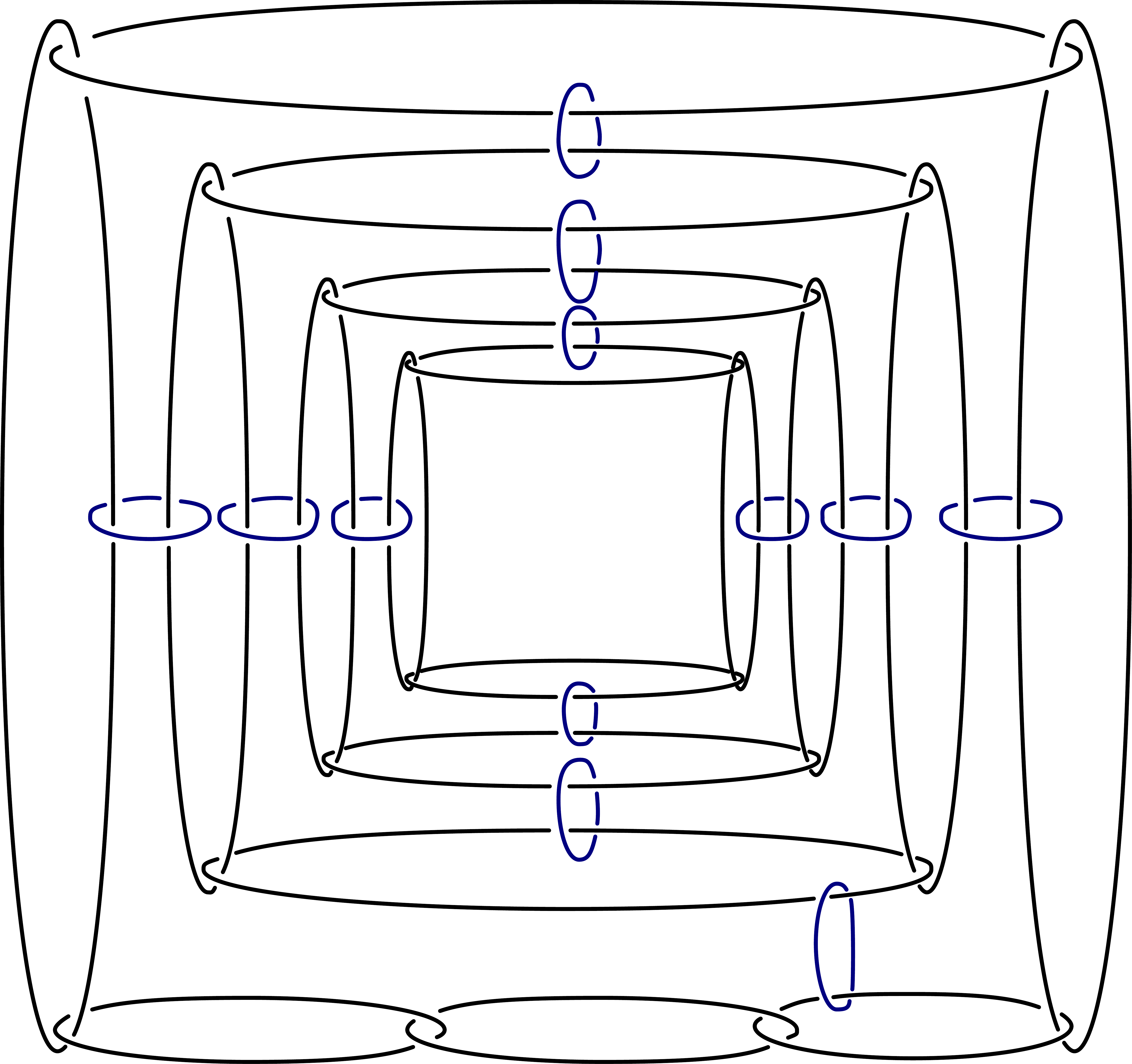}
\end{tabular}
\caption{A framework, for $k=3$ and $\ell = 6$.}\label{fig:chainframe}
\end{figure}

\begin{lemma}\label{lem:chainframe}
    Let $D$ be a crossing-minimal drawing of a framework ${\mathcal L}(k, \ell)$, with $k \geq 0$, and $\ell \geq 4$.
    Then $c(D) = c({\mathcal L}) = 24k + 2\ell$. Moreover, up to a homeomorphism of the surface we can assume that
    \begin{itemize}
        \itemi chain $L_i$ is drawn in the outer face of $L_{i+1}$, for $0 \leq i < k$,
        \itemii we can draw an (open, simple) curve $\gamma$ that crosses $L_0, \ldots, L_k$ in that order, crossing each $L_i$ exactly twice; $\gamma$ can be made to cross $L_0$ in any component not crossing a hinge if $\ell \geq 5$.
    \end{itemize}
\end{lemma}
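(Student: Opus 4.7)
The plan is to first nail down the exact crossing number by a counting argument, then exploit the tightness of this count to pin down the combinatorial layout of any crossing-minimal diagram.

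For the lower bound, each chain $L_i$ (of length $n_i = \ell$ for $i = 0$, and $n_i = 4$ otherwise) contributes at least $2n_i$ crossings: its $n_i$ consecutive pairs have unsigned linking number $1$, so each pair crosses at least twice. Each of the $4k$ hinges has linking number $\pm 1$ with each of its two linked rings, forcing at least $2$ crossings with each, hence $4$ crossings per hinge. Summing yields $2\ell + 8k + 16k = 24k + 2\ell$, matched by the construction in Figure~\ref{fig:chainframe}. So $c(\mathcal{L}(k,\ell)) = 24k+2\ell$, and in any crossing-minimal drawing $D$ each inequality above must be an equality. In particular $D$ has \emph{no extraneous crossings}: distinct chains do not cross, hinges do not cross one another, each hinge crosses only its two linked rings and exactly twice each, and no component self-crosses. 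Lemma~\ref{lem:orderlink} then applies to each chain, so every $L_i$ is drawn as in Figure~\ref{fig:nchain}, with well-defined inner and outer faces, and every single ring bounds a disk in $S^2$.

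For $(i)$, since $L_i$ and $L_{i+1}$ have no mutual crossings, $L_i$ lies entirely in a single face of $L_{i+1}$; analyzing the topology of the four linking hinges (each an unknot threading through matched rings of both chains) forces this face to be one of the two ``big'' faces of $L_{i+1}$, rather than a bigon or a single-ring interior, and consistency across the four hinges determines which big face it is. Sphere homeomorphisms can invert the inner and outer of any single chain independently, so we may arrange $L_i$ to lie in the outer face of $L_{i+1}$ simultaneously for every $i$. For $(ii)$, the nested configuration from $(i)$ (with $L_k$ innermost and $L_0$ outermost) lets me build $\gamma$ inductively: starting in the outer face of $L_0$, $\gamma$ pierces a single component of each chain in turn (contributing exactly two crossings per chain), ending inside $L_k$. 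If $\ell \geq 5$, at least one of $L_0$'s $\ell$ components carries no hinge (only four do), so $\gamma$ can be routed through any such component.

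The main obstacle I anticipate is the rigorous justification of the nesting in $(i)$: one must lift the local ``threading'' constraints imposed by individual hinges into a coherent global nesting of the chains, crucially exploiting the no-extraneous-crossings conclusion from the counting step to prevent pieces of $L_i$ from leaking into unintended faces of $L_{i+1}$.
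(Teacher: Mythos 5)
Your counting argument and the reduction to Lemma~\ref{lem:orderlink} (tightness forces no extraneous crossings, so each chain is drawn as in Figure~\ref{fig:nchain}) match the paper, and your treatment of $(ii)$ and of the hinge-forcing of $L_i$ into one of the two big faces of $L_{i+1}$ is at the same level of detail as the paper's. However, there is a genuine gap at the heart of $(i)$: you never rule out that $L_i$ and $L_{i+2}$ lie in the \emph{same} big face of $L_{i+1}$. Your justification --- ``sphere homeomorphisms can invert the inner and outer of any single chain independently, so we may arrange $L_i$ to lie in the outer face of $L_{i+1}$ simultaneously for every $i$'' --- is false: a homeomorphism of $S^2$ acts on the whole diagram at once, and which chains are separated from which by $L_{i+1}$ is a topological invariant of the embedding that no homeomorphism can change. ``Consistency across the four hinges'' also does not help, since the hinges joining $L_i$ to $L_{i+1}$ and those joining $L_{i+1}$ to $L_{i+2}$ are different components and impose no a priori relation between the faces of $L_{i+1}$ chosen by $L_i$ and by $L_{i+2}$. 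Yet the nesting (concentric chains, $L_0$ outermost, $L_k$ innermost) is exactly what you invoke to build $\gamma$ in part $(ii)$, so the argument as written does not close.

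The paper fills precisely this hole with a planarity obstruction: assuming $L_i$ and $L_{i+2}$ both lie in the outer face of $L_{i+1}$, contracting $L_i$, $L_{i+2}$, the four rings of $L_{i+1}$, and the eight hinges yields a planar drawing of $K_{2,4}$ in which the four degree-$2$ vertices all lie on a common face; adding a new vertex in that face joined to all four would give a planar $K_{3,4}$, a contradiction. (This is also where the fourth hinge is genuinely needed, as the paper's footnote notes --- with three hinges $L_i$ could sit in a face of $L_{i+1}$ bounded by arcs of only three rings.) Once this separation is established, the chains are forced to be concentric and the labeling ``outer face'' can be chosen consistently, after which your construction of $\gamma$ goes through. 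You need to add an argument of this kind (or some other explicit obstruction) before the nesting claim in $(i)$ can be used.
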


In other words, up to a homeomorphism and changes in crossing types, we can assume that the diagram looks as pictured in Figure~\ref{fig:chainframe}.

\begin{proof}
    Figure~\ref{fig:chainframe} illustrates how to draw ${\mathcal L}(k, \ell)$. The rings contribute $8k + 2\ell$ crossings between themselves, and the hinges add $16k$ crossings to that. All of these crossings are unavoidable, so $c(D) = c({\mathcal L}) = 24k + 2\ell$. In particular, there cannot be any additional crossings (including self-crossings).

    By Lemma~\ref{lem:orderlink} that implies that each chain is drawn as described there, bounding two faces, and there is no crossing between two distinct chains. We cannot immediately assume that the two faces of each chain are inner and outer faces with regard to all chains, since the homeomorphisms for that may differ for each chain. The presence of the hinges resolves this issue, by forcing the chains to be laid out compatibly. Consider two consecutive chains, $L_i$ and $L_{i+1}$, of the framework. Apply a homeomorphism so that $L_{i+1}$ bounds an inner and an outer face, and so that $L_i$ does not lie in the inner face. Then $L_i$ must lie in the outer face of $L_{i+1}$, since otherwise the four hinges linking $L_i$ and $L_{i+1}$ cannot be drawn with at most $16$ crossings. We need to argue that $L_{i+2}$ lies in the inner face of $L_{i+1}$. Because of the hinges, it must lie either in the inner or the outer face of $L_{i+1}$, so let us consider, for a contradiction, that both $L_i$ and $L_{i+2}$ lie in the outer face of $L_{i+1}$. Contracting arcs, it is easy to see that this would result in a planar drawing of the complete bipartite graph $K_{2,4}$ in which the four vertices of degree $2$ lie in the same face; this is not possible, since we could then add a vertex to that face, and connect it to each of the four vertices, yielding a planar drawing of a $K_{3,4}$.\footnote{For this part of the proof, three hinges would be sufficient, however, we need the fourth hinge to ensure that $L_i$ does not lie inside a face bounded by arcs from three components only.}

    We conclude that the $L_i$ are drawn as concentric, disjoint chains, with $L_0$ being the outermost chain (this is arbitrary, of course, it could be innermost as well).

    We can now start in the outer face of $L_0$; if $\ell \geq 5$, there is a knot in the chain $L_k$ which does not cross a hinge. Let $\gamma$ cross through two arcs of that knot to reach the inner face of $L_0$. This is also the outer face of $L_1$. The hinges are disjoint, so $\gamma$ can be continued to cross $L_1$ and so on, until it has reached the inner face of $L_k$.
\end{proof}

\section{NP-hardness of Link Crossing Number}\label{sec:NPH}

In this section we prove Theorem~\ref{thm:linkNPhard} and Corollary~\ref{cor:linkNPC}. We start with the theorem.

Suppose we are given a bipartite graph $G = (U \cup V, E)$ in which all vertices in $U$ have degree $4$, all vertices in $V$ have degree $1$. We are also given an order of the $V$-vertices and an integer $k$. We want to efficiently construct a link diagram $D$ of a weighted link $L^w$ and a number $k'$ so that the bipartite crossing number $\bcr(G)$ of $G$ is at most $k$ if and only if
the weighted link crossing number of $L^w$ is at most $k'$ if and only if the link crossing number of the unweighted link $L'$ associated with $L$ is at most $k'$.

\begin{figure}[htb]
  \centering
    \def\svgwidth{7cm}
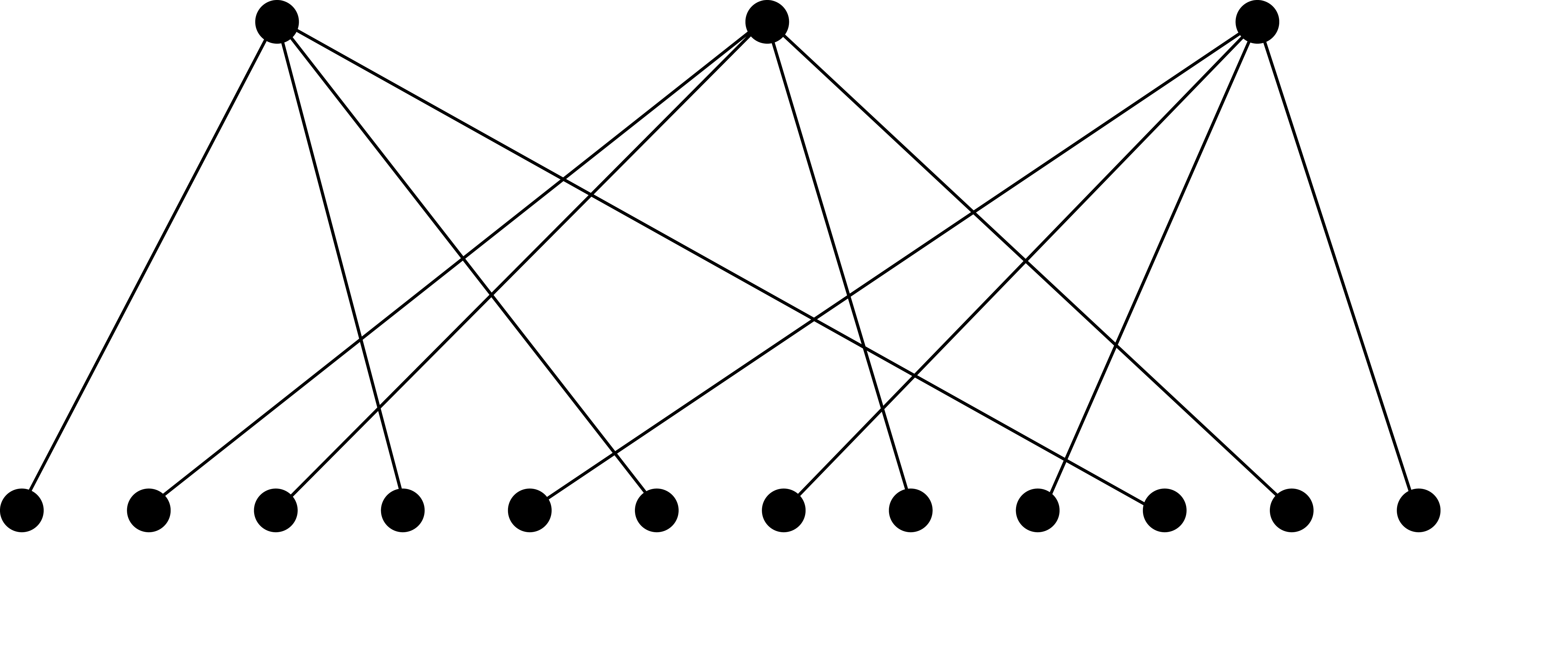
\caption{A sample bipartite graph in which all vertices of $U$ have degree $4$, all vertices of $U$ have degree $1$, and the order of the vertices in $V$ is fixed. The bipartite drawing has $16$ crossings. This is optimal for this graph (as long as the order of vertices in $V$ is unchanged).}\label{fig:bipex}
\end{figure}

We begin by constructing the framework for the reduction. Let $n_U := |U|$ and $n_V := |V| = 4n_U$.
Let $U = \{u_1, \ldots, u_{n_U}\}$, and let $v_1, \ldots, v_{n_V}$ be the vertices of $V$ in the given order.
We start with the framework ${\cal F} := {\mathcal L}(2n_U, n_V+4)$. All components in ${\cal F}$, both ring-components
and hinges, are assigned weight $w_1$ (all weights to be determined later). By Lemma~\ref{lem:chainframe},
the weighted framework ${\cal F}$ itself has crossing number $c_{\mathit{frame}} := (48n_U + 2(4n_U+4)] w_1^2 = (56n_u+8) w_1^2$.

We next add $n_U$ {\em $U$-guards} which are (crossing-free) unknots linking the $i$-th ring in ${\cal F}$ to the
$i+n_U$-th ring in ${\cal F}$ for $1 \leq i \leq n_U$. Within both rings we link the components crossed by $\gamma$ (guaranteed to
exist by Lemma~\ref{lem:chainframe}). If we think of the rings $L_i$ and $L_{i+n_U}$ as having $z$-height $i$ in the drawing,
this determines how the guards cross over/under other rings: The guard linking $L_i$ and $L_{i+n_U}$ will pass under $L_j$ for $i < j \leq n_U$ and over $L_j$ for $n_U < j \leq 2n_U$. The ring $L_0$ contains $n_V$ consecutive components which do not cross a hinge, we link a distinct unknot, called a {\em $V$-guard}, with each such component. We number the $V$-guards in the order that the chain-components they are linked to occur along
the ring $L_0$. We assign weight $w_2 < w_1$ to all guard components.

Figure~\ref{fig:linkdiag} illustrates that the framework, together with both types of guards, can be realized with $c_{\mathit{frame}} + c_{\mathit{guards}}$,
crossings, where $c_{\mathit{guards}} = (4n^2_U +2 n_V) w_1w_2$.

Finally, we link the $i$-th $U$-guard with the $j$-th $V$-guard via an unknot if there is an edge $u_iv_j$ in $G$. We call the added unknots {\em edge-links}. Edge-links have weight $1$ each. The $z$-height of the rings $L_i$, $1\leq i \leq n_U$ again determines the over/under of crossings between the edge-links as well as edge-links and rings. Any part of the edge-link belonging to edge $u_iv$ lies below any part of the edge-link belonging to $u_jv'$ for $i<j$, and edge-links cross above the rings they cross. Call the resulting weighted link $L^w$.

\begin{figure}[htb]
\begin{tabular}{c}
\includegraphics[width=0.92\textwidth]{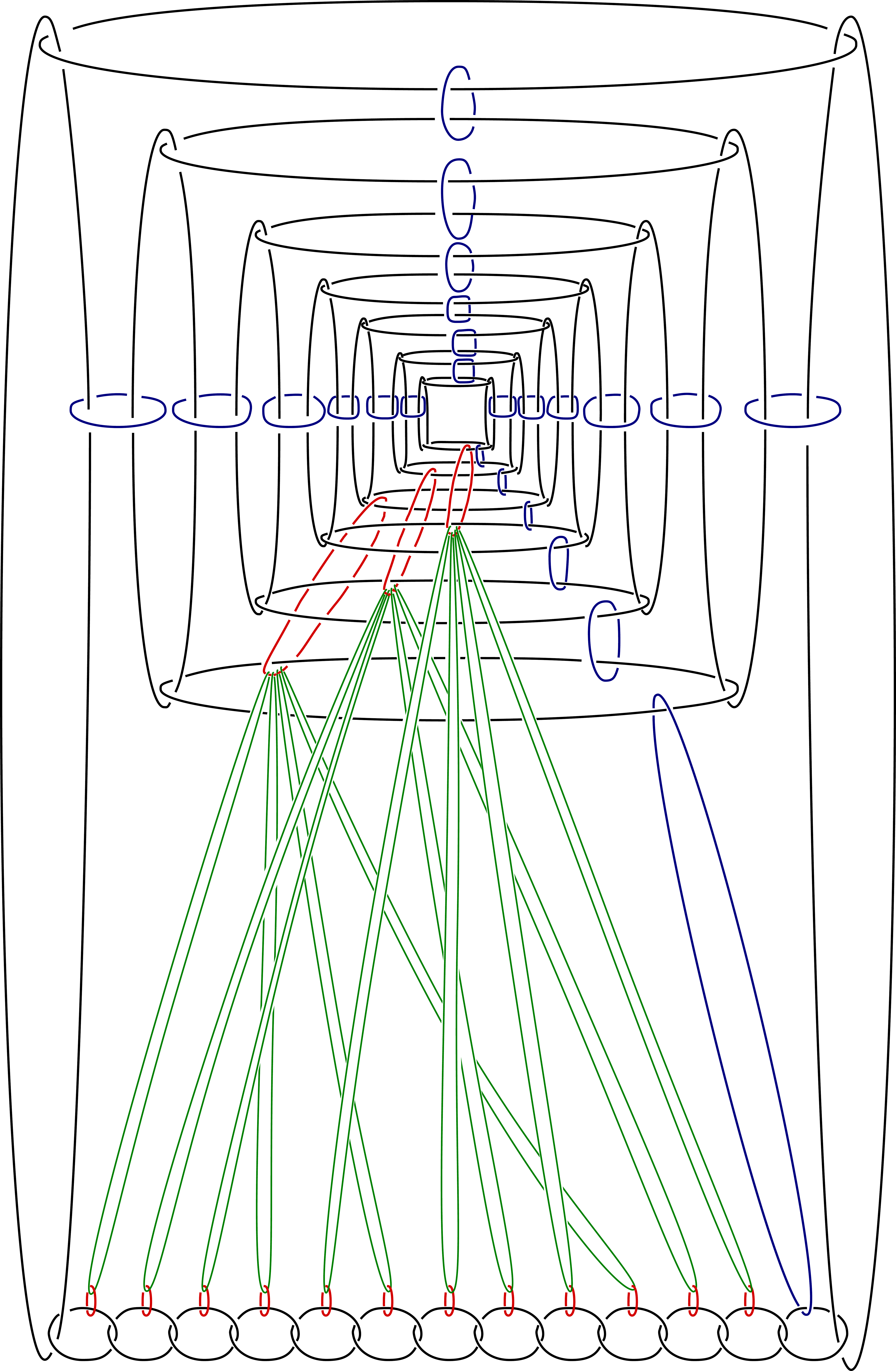}
\end{tabular}
\caption{Link diagram of $L^w$ for the sample graph $G$ from Figure~\ref{fig:bipex}. Some underedges under two adjacent subarcs of the edge links have been omitted for clarity.}\label{fig:linkdiag}
\end{figure}

Consider an arbitrary ordering of the vertices in $U$. Together with the fixed ordering $v_1, \ldots, v_{n_V}$ of $V$ this determines a bipartite drawing of $G$ with a number $c$ of crossings. We can slide the $U$-guards along the rings so their order (as seen along $L_n$, the ring that all $U$-guards have to cross) agrees with the ordering of vertices in $U$; here it matters that
we can think of the rings as having $z$-height, so $L_i$ and $L_{i+n_U}$ are
at the same $z$-height, and above later pairs of rings, that is, $L_j$ and $L_{j+n_U}$ with $1 \leq i < j \leq n_U$.
The resulting diagram of $L^w$ has a weighted link crossing number of
\[c_{\mathit{frame}} + c_{\mathit{guards}} + 4 \cdot 2 \cdot n_U^2 w1 + 4n_V w_2 + 4c,\]
where we use that $\sum_{i = 1}^{n_U} 2i-1 = n_U^2$ for counting the crossings between edge-links and rings (we also need that $\gamma$ extends to $L_0$); there are $n_V$ edge-links each resulting in four crossings with guards (two with each type); and every two edge-links that cross in $G$ result in four crossings in the diagram.

In summary: if $G$ has a bipartite drawing with crossing number at most $c$, then there is a weighted link diagram
of $L^w$ with at most $c' = c_{\mathit{frame}} + c_{\mathit{guards}} + 8 n_U^2 w1 + 4n_V w_2 + 4c$ crossings. The components
of $L^w$ are drawn without self-crossings in this diagram, so we can apply Lemma~\ref{lem:unweight} to obtain
an unweighted link $L'$, and a drawing $D$ of $L'$ with at most $c(D) \leq c'$ crossings. By construction, $L'$ does
not depend on the specific diagram of $L^w$, but only on $L^w$, and, thereby, on $G$. This completes the description of the reduction from $G$ to (unweighted) link $L'$.

We would like to argue, for the other direction, that if $L'$ has a drawing with at most
$c'$ crossings, then $G$ has a bipartite drawing with crossing number at most $c$. We cannot directly argue
that $L^w$ has a drawing with at most $c'$ weighted crossings, since the typical graph-drawing arguments do not work (it is not immediately clear, for example, that the various components in $L'$ corresponding to the same weighted component in $L^w$ can be combined so that they can be merged into a drawing of the weighted component; there may be self-crossings, and crossings between components, that make such arguments impossible).

Instead, we work with a different notion of link equivalence. Let $D$ be a drawing of a link which is
parity-link equivalent to $L'$ and having the smallest number of crossings. By assumption, that number is at most
$c'$, since the drawing of $L'$ qualifies.

By Lemma~\ref{lem:unknot} we can assume that every component in $D$ is free of self-crossings (otherwise smoothing it would decrease the number of crossings without affecting parity-linking number equivalence). Let $K$ be a weighted component in $L^w$. This component corresponds to $w(K)$ (unweighted) components in $D$. From each such group, pick the component with the fewest number of crossings, assign to it weight $w(K)$ and remove all other components belonging to the group. This yields a weighted drawing $D'$ which is parity-link-equivalent to $L^w$, and with the same weights. Moreover, the weighted crossing number of $D'$ is at most $c'$, since the crossing number did not increase.

While $D'$ will not, in general, be a diagram of $L^w$, we will now show that it is close enough to obtain a drawing of $G$ with at most $c$ crossings. Consider any two components $K$, and $K'$ of weight $w_1$ in $D'$. If the parity of the linking number is odd, then $K$ and $K'$ must cross at least twice. Since $K$ and $K'$ must also have crossed in $L^w$, this means the weight $w_1$ components contribute
at least $c_{\mathit{frame}} = (56n_u+8) w_1^2$ crossings between each other to $D'$. Without these crossings, there remain at most
$c' - c_{\mathit{frame}} = c_{\mathit{guards}} + 8 n_U^2 w1 + 4n_V w_2 + 4c$ weighted crossings. If we ensure that this number is less
than $w_1^2$, there can be no other crossings between two components of weight $w_1$. In particular, any two weight-$w_1$ components cross either twice or not at all, exactly as they did in $L^w$ (as pictured in Figure~\ref{fig:linkdiag}).

Lemma~\ref{lem:chainframe} then allows us to conclude that the frame is drawn exactly as described (and shown in Figure~\ref{fig:chainframe}).
We next turn our attention to the guards. We know there are at most $c_{\mathit{guards}} + 8 n_U^2 w1 + 4n_V w_2 + 4c$ weighted
crossing left, where $c_{\mathit{guards}} = (4n^2_U +2 n_V) w_1w_2$. If we ensure that $w_1w_2 > 8 n_U^2 w1 + 4n_V w_2 + 4c$
there can be no other crossings between rings and guards than the intended ones. That implies that rings and guards which cross, must cross two or four times: two times if they are linked, and four times if the guard passes under the ring, exactly as in the original $L^w$. In particular, the framework is drawn as shown in Figure~\ref{fig:linkdiag} (up to homeomorphism and crossing changes), for some ordering of the $U$-guards. We are left with the contribution of the edge-links;
as earlier, the crossings between edge links and rings and guards shown in Figure~\ref{fig:linkdiag} are unavoidable, so they occur in the same way, contributing $8 n_U^2 w1 + 4n_V w_2$ to the crossing number, leaving $4c$ crossings involving edge-links. Thus, if $w_1 > 4c$, there can be no additional crossings between edge links and anything in the framework, and if $w_2 > 4c$, there can be no additional crossings between edge links and guards.


We can now use the drawing $D'$
to read off a bipartite drawing of $G$: Consider $L_{n_U+1}$, and take the subarc $\ell_1$ that crosses each $U$-guard exactly twice, and nothing else. Shrink each $U$-guard to a vertex lying on the subarc $\ell_1$. Similarly, shrink the $V$-guards to vertices on a subarc $\ell_2$ of $L_0$. Since $U$-guards only cross the edge-links that are linked to them, these shrinkings do not create additional crossings between edge-links. And since edge-links do not cross $\ell_1$ nor $\ell_2$, this yields a bipartite drawing of $G$, where $\ell_1$ and $\ell_2$ are the the lines and where each edge is drawn using two arcs (the two arcs of the edge-link). Picking, for each edge, the arc with fewer crossings, we obtain a bipartite drawing of $G$ with at most $4c/4  = c$ crossings, which is what we had to show.

We still need to show that we can choose $w_1$ and $w_2$ satisfy the following four conditions:
\begin{itemize}
 \itemi $w_1^2 > c_{\mathit{guards}} + 8 n_U^2 w1 + 4n_V w_2 + 4c = (4n^2_U +2 n_V) w_1w_2 + 8 n_U^2 w1 + 4n_V w_2 + 4c$,
 \itemii $w_1w_2 > 8 n_U^2 w1 + 4n_V w_2 + 4c$,
 \itemiii $w_1 > 4c$, and
 \itemiv $w_2 > 4c$.
\end{itemize}
Using that $c \leq n_V^2$ and $n_V = 4n_U$, we can upper bound the right-hand side of $(ii)$ as
$8 n_U^2 w1 + 16n_U w_2 + 64n_U^2 < 80 n_U^2 w_1$, assuming that $w_1 > w_2$, so we can let $w_2 = 80n_U^2$ to satisfy $(ii)$.
The right-hand side of $(i)$ can be upper-bounded by $12n_U^2 w_1w_2 + 80n_U^2 w_1 < w_1 (12n_U^2 w_2 + 80 n_U^2)$, so we can
satisfy $(i)$ by letting $w_1 =  12n_U^2 w_2 + 80 n_U^2$. With these values, the four required estimates hold. This completes the proof of Theorem~\ref{thm:linkNPhard}.

For Corollary~\ref{cor:linkNPC} we observe that the \NP-hardness proof above works identically for all the equivalence notion that we have introduced. Indeed, the first part of the proof builds a diagram of $L$ with crossing number at most $c'$ from a drawing of $G$ with bipartite crossing number at most $c$, and thus this is also a valid diagram for all the notions of equivalence. The second part of the proof shows that from any parity-linking number equivalent diagram of $L$ with crossing number $c'$, one can efficiently compute a bipartite drawing of $G$ with at most $c$ crossings. Since parity-linking number equivalence refines all the other notions of equivalence, combining both sides shows that, no matter the equivalence notion, $L$ has an equivalent link diagram with at most $c'$ crossings if and only if $G$ has bipartite crossing number $\bcr(G)$ at most $c$.


For \NP-membership, it is easy to see that for parity-linking-number equivalent and linking-number equivalence, we can, in \NP, find a diagram with at most a given number of crossings, and verify equivalence efficiently by computing linking numbers.


\section{Conclusion}

The construction on Theorem~\ref{thm:linkNPhard} depends on the interaction of the various components to encode the graph crossing number problem. Can the number of components be reduced, say to a logarithmic number, or even a constant, for example by using alternating knots that are forced to have certain layouts in crossing-minimal drawings? For a single component, we have the knot crossing number problem, for which we are not aware of any upper or lower bounds (other than the upper bound based on knot equivalence).

There is a natural weakening of link equivalence which is not captured by the proof of Theorem~\ref{thm:linkNPhard}: call two links $L$ and $L'$ {\em linkedness-equivalent} if and only if there is a bijection between the components of $L$ and $L'$ so that any two components in $L$ are linked if and only if the corresponding components in $L'$ are linked. The crossing number problem for linkedness-equivalence lies in \NP, since both
linkedness and non-linkedness lie in $\NP$~\cite{HLP99,L16}, so the link crossing number problem for linkedness-equivalence lies in $\NP^{\NP \cap \coNP} = \NP$ (a basic fact from computational complexity~\cite{S83}), but our proof does not directly yield \NP-hardness of the link crossing number in this case, since the parity-linking number equivalence does not refine linkedness equivalence.

One intriguing question is based on link equivalence testing being a bottleneck in improving the upper bound on the link crossing number problem. Is testing link equivalence actually necessary? To make that precise: can the link diagram equivalence problem be reduced to the link crossing number problem? That would show that the bottleneck is essential, and cannot be removed.


\bibliographystyle{plain}
\bibliography{links}

\begin{thebibliography}{10}

\bibitem{CL16}
Alexander Coward and Marc Lackenby.
\newblock An upper bound on {R}eidemeister moves.
\newblock {\em Amer. J. Math.}, 136(4):1023--1066, 2014.

\bibitem{FM66}
Ralph~H. Fox and John~W. Milnor.
\newblock Singularities of {$2$}-spheres in {$4$}-space and cobordism of knots.
\newblock {\em Osaka Math. J.}, 3:257--267, 1966.

\bibitem{HLP99}
Joel Hass, Jeffrey~C. Lagarias, and Nicholas Pippenger.
\newblock The computational complexity of knot and link problems.
\newblock {\em J. ACM}, 46(2):185--211, 1999.

\bibitem{L16}
Marc Lackenby.
\newblock The efficient certification of knottedness and thurston norm.
\newblock {\em arXiv preprint arXiv:1604.00290}, 2016.

\bibitem{L17}
Marc Lackenby.
\newblock Elementary knot theory.
\newblock In {\em Lectures on geometry}, Clay Lect. Notes, pages 29--64. Oxford
  Univ. Press, Oxford, 2017.

\bibitem{M54}
John Milnor.
\newblock Link groups.
\newblock {\em Ann. of Math}, 59(2):177--195, 1954.

\bibitem{MUV02}
Xavier Mu{\~{n}}oz, Walter Unger, and Imrich Vr{\v t}o.
\newblock One sided crossing minimization is {NP}-hard for sparse graphs.
\newblock In {\em Graph drawing ({V}ienna, 2001)}, volume 2265 of {\em Lecture
  Notes in Comput. Sci.}, pages 115--123. Springer, Berlin, 2002.

\bibitem{S13}
Marcus Schaefer.
\newblock The graph crossing number and its variants: A survey.
\newblock {\em The Electronic Journal of Combinatorics}, 2013.
\newblock Dynamic Survey 21 (last updated December 22nd, 2017.

\bibitem{S83}
Uwe Sch\"{o}ning.
\newblock A low and a high hierarchy within {NP}.
\newblock {\em J. Comput. System Sci.}, 27(1):14--28, 1983.

\bibitem{T1877}
P.~G. {Tait}.
\newblock {On knots}.
\newblock {\em {Proc. R. Soc. Edinburgh}}, 9:306--317, 1877.

\end{thebibliography}

\end{document}